\def\eqref#1{equation~\ref{#1}}
\def\1{\bm{1}}
\def\vr{{\bm{r}}}
\def\vx{{\bm{x}}}
\DeclareMathAlphabet{\mathsfit}{\encodingdefault}{\sfdefault}{m}{sl}
\SetMathAlphabet{\mathsfit}{bold}{\encodingdefault}{\sfdefault}{bx}{n}
\def\gG{{\mathcal{G}}}
\def\gL{{\mathcal{L}}}
\def\gO{{\mathcal{O}}}
\def\gS{{\mathcal{S}}}
\def\gU{{\mathcal{U}}}
\def\gV{{\mathcal{V}}}
\def\sR{{\mathbb{R}}}
\newcommand{\ours}{CausalRec}
\begin{document}

\title{CausalRec: A CausalBoost Attention Model for Sequential Recommendation}

\author{Yunbo Hou}
\authornote{Equal contribution.}
\affiliation{
	\department{School of Software and Microelectronics}
	\institution{Peking University}
	\city{Beijing}
	\country{China}
}
\email{yunboh@stu.pku.edu.cn}

\author{Tianle Yang}
\authornotemark[1]
\affiliation{
	\institution{Alibaba Group}
	\city{Beijing}
	\country{China}
}
\email{yangtianle.ytl@alibaba-inc.com}

\author{Ruijie Li}
\affiliation{
	\department{School of Software and Microelectronics}
	\institution{Peking University}
	\city{Beijing}
	\country{China}
}
\email{howtolove17@stu.pku.edu.cn}

\author{Li He}
\affiliation{
	\institution{Alibaba Group}
	\city{Beijing}
	\country{China}
}
\email{heli@taobao.com}

\author{Liang Wang}
\orcid{0009-0008-1358-9594}
\affiliation{
	\institution{Alibaba Group}
	\city{Beijing}
	\country{China}
}
\email{wangliang@taobao.com}

\author{Weiping Li}
\affiliation{
	\department{School of Software and Microelectronics}
	\institution{Peking University}
	\city{Beijing}
	\country{China}
}
\email{wpli@ss.pku.edu.cn}

\author{Bo Zheng}
\affiliation{
	\institution{Alibaba Group}
	\city{Beijing}
	\country{China}
}
\email{bozheng@alibaba-inc.com}

\author{Guojie Song}
\authornote{Corresponding author.}
\affiliation{
	\department[0]{National Key Laboratory of General Artificial Intelligence}
	\department[1]{School of Intelligence Science and Technology}
	\institution{Peking University}
	\city{Beijing}
	\country{China}
}
\email{gjsong@pku.edu.cn}

\begin{abstract}
  Recent advances in correlation–based sequential recommendation systems have demonstrated substantial success. Specifically, the attention-based model outperforms other RNN-based and Markov chains-based models by capturing both short- and long-term dependencies more effectively. However, solely focusing on item co-occurrences overlooks the underlying motivations behind user behaviors, leading to spurious correlations and potentially inaccurate recommendations. To address this limitation, we present a novel framework that integrates causal attention for sequential recommendation, CausalRec. It incorporates a causal discovery block and a CausalBooster. The causal discovery block learns the causal graph in user behavior sequences, and we provide a theory to guarantee the identifiability of the learned causal graph. The CausalBooster utilizes the discovered causal graph to refine the attention mechanism, prioritizing behaviors with causal significance. Experimental evaluations on real-world datasets indicate that CausalRec outperforms several state-of-the-art methods, with average improvements of 7.21\% in Hit Rate (HR) and 8.65\% in Normalized Discounted Cumulative Gain (NDCG). To the best of our knowledge, this is the first model to incorporate causality through the attention mechanism in sequential recommendation, demonstrating the value of causality in generating more accurate and reliable recommendations. Our code is available at \url{https://anonymous.4open.science/r/CausalRec-202B/}.
\end{abstract}

\begin{CCSXML}
<ccs2012>
   <concept>
       <concept_id>10002951.10003317.10003347.10003350</concept_id>
       <concept_desc>Information systems~Recommender systems</concept_desc>
       <concept_significance>500</concept_significance>
       </concept>
   <concept>
       <concept_id>10002950.10003648.10003649.10003655</concept_id>
       <concept_desc>Mathematics of computing~Causal networks</concept_desc>
       <concept_significance>500</concept_significance>
       </concept>
 </ccs2012>
\end{CCSXML}

\ccsdesc[500]{Information systems~Recommender systems}
\ccsdesc[500]{Mathematics of computing~Causal networks}

\keywords{Recommender Systems, Causal Discovery, Causal Network}

\maketitle

\section{Introduction}
\label{sec:intro}

The goal of sequential recommendation is to uncover hidden patterns in user behavior. Current studies often incorporate correlation to identify these patterns, motivated by the natural intuition that sequential behaviors are typically related. Building on this intuition, various methods have been developed to model correlations effectively. FPMC \cite{fmpc} models user behavior as Markov chains, assuming the current behavior is influenced only by the most recent one. GRU4Rec \cite{gru4rec} employs recurrent neural networks (RNNs) to encapsulate prior actions within a hidden state. Another approach uses attention mechanisms \cite{SASRec, Bert4Rec} to differentiate the importance of historical actions.
\begin{figure}[!htbp]
	\includegraphics[width=1.0\linewidth]{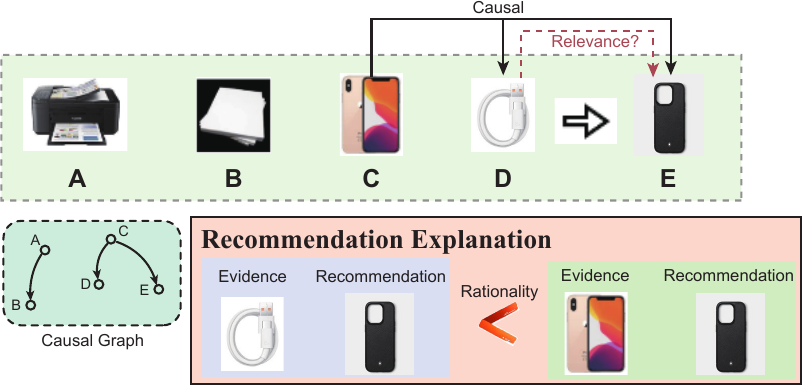}%
	\caption{Illustrating the motivation for modeling causal relationships among user behaviors. Consider the example: although the cable and the phone shell frequently co-occur in the same user behavior sequence, it does not imply that they influence each other directly. Their co-occurrence arises from their shared causal relationship with the phone.}
 \Description{}
	\label{fig:why-causal}
\end{figure}
As correlation has achieved great performance in recommendations,  these approaches typically learn \emph{spurious correlations} \cite{resort, kg}, which refers to a connection between two or more variables that appear to be causal but are not in fact. For example, in Fig. \ref{fig:why-causal}, correlation-based models might identify the correlation between the purchase of the phone shell and cable based on their co-occurrence. However, it doesn't imply that purchasing cables causes the purchase of phone shells. Instead, both purchases are driven by a shared causal factor: the purchase of phones. The example illustrates that by focusing on observed co-occurrence patterns, correlation-based models can be misled by such spurious correlations and find it hard to identify the real causal relationships, leading to wrong recommendations.

To address this limitation, we integrate causality into recommendations to capture the true causal relationships underlying user behavior.
However, incorporating it into recommendation systems poses significant challenges. First, ensuring identifiability—whether the learned causal relationship accurately reflects the user behavior pattern—is inherently difficult in sequential recommendation data. Second, the typically large number of items in recommendation systems leads to exponential growth in the search space for the learned causal relationship. Third, effectively integrating mined causal relationships into recommendation models is complex. While prior work \cite{causer} employed clustering-based methods to learn stable causal relations and filter out irrelevant items, the absence of an identifiability guarantee can risk learning incorrect user behavior patterns. The clustering-based method alleviates the huge search space issue, but it leads the a loss in user interest, and whether the reconstructed causal relation based on the cluster indeed correctly reflects the real causal relation is questionable. Additionally, the filtering methods can inadvertently discard important items, leading to a decline in recommendation performance. Our experiment in Section \ref{sec:causality-eval} has shown that our model with the filtering strategy shows a worse performance in some cases compared with SASRec, which is a correlation-based sequential recommendation model with self-attention mechanisms.

To address these challenges, we propose a CausalBoost attention model for sequential recommendation, CausalRec. We introduce a causal discovery block to infer causal relations (i.e., causal graphs) on user behavior sequences and a CausalBooster to incorporate the learned causality into the attention mechanism. We also provide theoretical guarantees for the identifiability of the learned causal graph, ensuring that it accurately captures user behavior patterns.

The causal discovery block plays a pivotal role in CausalRec. Constructing item-level causal graphs directly from real-world datasets is typically intractable because the search space grows exponentially with the number of items. To address this, we leverage the structure causal model (SCM) framework and estimate causal structures using the covariance matrix, thereby mitigating the large search space. As the attention mechanism can estimate the observed nodes of the SCM, we utilize the final layer representation of the transformer as samples to calculate covariance. By integrating layer normalization and attention mechanism in the transformer, we impose the equal variance and linear SCM assumption, ensuring our identifiability \cite{boost}.  Therefore, we can employ continuous optimization with acyclicity and sparse constraints within the transformer framework to discern causal structures from input sequences. In the CausalBooster, we address the challenge of integrating causality into recommendation models. Unlike previous methods that rely on a filtering strategy and risk discarding critical historical items, our attention-based fusion enhances the attention weight according to the learned causal graph. This approach preserves valuable user interests, even in cases where the discovered causal graph contains minor inaccuracies, ensuring robust performance. 

In summary, the main contributions of this paper are as follows:
\begin{enumerate}[nosep]
    \item We propose CausalRec, a CausalBoost attention model for sequential recommendation that integrates causality into the recommendation. To the best of our knowledge, this is the first model that incorporates causality through the attention mechanism into the sequential recommendation.
    \item We introduce a Causal Discovery Block, which successfully introduces causality into the attention mechanism with identifiability guarantees, enabling attention to focus on causally relevant signals within user interaction sequences.
    \item We present the CausalBooster, an effective procedure to utilize the causal effect matrix, integrate causality into the attention mechanism by amplifying causal effects.
    \item We conduct extensive evaluations on four datasets. \ours achieves average improvements of approximately 4.71\% in HR and 8.57\% in NDCG, validating its effectiveness in improving recommendation performance.
\end{enumerate}

\section{Related Work}
\label{sec:related}

This work focuses on the intersection of sequential recommendation and causal discovery. In this section, we provide an overview of recent advances in these domains and illustrate the connections between our proposed framework and existing studies.

\subsection{Sequential Recommendation}
\label{sec:related-seq-rec}

Recommendation systems have increasingly adopted models that account for the temporal and dynamic nature of user-item interactions. Traditional methods like matrix factorization \cite{mf} and static content modeling focus on aggregated user preferences, overlooking how these preferences evolve. In contrast, sequential recommendation methods take advantage of the user interaction history to predict future preferences. A common strategy is to employ encoder-like architectures that transform historical interaction sequences into latent representations, which are then used to predict the representation of the next item. 

Early sequential models, such as Markov chains \cite{fmpc}, treated user behavior as a stochastic process, relying on limited interaction windows. Extensions like Markov decision processes \cite{mdp} incorporated longer-term decision-making but struggled with data sparsity and complex transition patterns. 

Deep learning has significantly advanced this field. Recurrent neural networks (RNNs) \cite{gru4rec} captured richer temporal dependencies, while gated variants like GRUs and LSTMs improved session-based recommendations. Attention mechanisms \cite{SASRec} and transformers \cite{Bert4Rec} further enhanced performance by learning contextual embeddings without strict sequential processing, enabling them to handle complex, long-range dependencies.

Our work builds on these research directions by focusing on capturing item relationships. However, we extend them by directly discovering causal relations with identifiability guarantees, marking a significant advancement in sequential recommendation research.

\subsection{Causal Discovery}
\label{sec:related-causal-discovery}

Causal discovery seeks to uncover causal relationships among variables from observational or experimental data, forming a crucial foundation for understanding complex systems across various domains. Classical methods can be categorized into constraint-based, causal function-based, score-based, and gradient-based methods.

Constraint-based approaches rely on systematically testing for conditional independence (CI) among variables to infer causal edges under the assumption of causal sufficiency (i.e., no unmeasured confounders). A prominent example is the PC algorithm \cite{pc, PC-Stable, Copula-PC}, which starts with a fully connected graph and iteratively removes edges based on statistical CI tests until all (conditional) independencies are satisfied. While these methods are theoretically sound, they are sensitive to the quality of independence tests and struggle with high-dimensional data. 

Score-based methods cast causal discovery as an optimization problem, searching for a directed acyclic graph (DAG) that maximizes—or minimizes—a specific score function. Common scores include the Bayesian Information Criterion (BIC) \cite{bic}. While more robust to noise, these methods become computationally expensive as the number of variables increases. 

Functional causal models (FCMs) assume each variable is generated by its direct causes through a (potentially nonlinear) function combined with an independent noise term, allowing causal direction to be inferred under certain assumptions. Representative methods include LiNGAM (Linear Non-Gaussian Acyclic Model) \cite{lingam}, which exploits non-Gaussianity to discover causal ordering in linear settings, as well as additive noise models (ANM) \cite{ANM} and post-nonlinear models (PNL) \cite{PNL}, which relax linear constraints. These approaches often rely on the independence between noise and cause for identification; however, if such assumptions are violated, performance may decrease. Recent advancements in machine learning have significantly advanced causal discovery. Deep learning models, such as variational autoencoders and generative adversarial networks, have been adapted to infer causal structures in high-dimensional and non-linear datasets \cite{high-dim-causal-dis,non-linear-causal-dis}. These models exploit the representational power of neural networks to identify causal relationships that traditional methods may overlook. Differentiable causal discovery frameworks, such as NOTEARS \cite{notears,notears-mlp,notears-nonlinear, golem}, reformulate causal graph learning as a continuous optimization problem, enabling the use of gradient-based methods. These approaches have demonstrated scalability and effectiveness in handling large datasets, particularly when combined with sparsity constraints to regularize the learned causal graphs.

Our work builds on the NOTEARS framework, which formulates Bayesian structural learning as a continuous optimization problem. We incorporate this framework into recommendation systems, making an effective attempt to learn user behavior causality.

\section{Preliminary}
\label{sec:Preliminary}
In this section, we first present the formal statement of the structural causal model and the attention mechanism, and then we establish the link between them.
\subsection{Structual Causal Models}
\label{sec:Preliminary-SCM}
A Structural Causal Model (SCM) is a framework for representing causal relationships among a set of variables. Formally, an SCM consists of:
\begin{enumerate}
\item A set of variables $x=\{x_1, x_2, \ldots, x_n\}$.
\item A directed acyclic graph (DAG) $\mathcal{G}$, where nodes represent variables and edges denote functions.
\item A set of functions $f_i$ connecting the variables, where $x_i = f_i(\text{Pa}(x_i), u_i)$, and $\text{Pa}(x_i)$ represents parents of $x_i$ in $\mathcal{G}$.
\item $u_i$: An exogenous term for each variable, assumed to be mutually independent.
\end{enumerate}
As a special case of SCMs, a linear SCM can be represented with:
\begin{equation}\label{eq-scm}
x_i =  \sum_{i \in \text{Pa}(j)} \beta_{ij} x_{j} + \lambda_{jj} u_j.
\end{equation}
The linear SCM in Equation~(\ref{eq-scm}) has the following matrix form:
\begin{equation}\label{eq-scm-m}
{X} = {B} {X} + {\Lambda} {U},
\end{equation}
where ${X}=(x_1,\dots,x_n)^\top$, ${U}=(u_1,\dots,u_n)^\top$: noise matrix, and \( {B} \in \mathbb{R}^{n \times n} \) is an edge weight matrix, an autoregression matrix, 
or a weighted adjacency matrix with each element \([{B}]_{ij} = \beta_{ij}\), in which \(\beta_{ij}\) is the linear weight of an edge from \( x_i \) to \( x_j \) and ${\Lambda}$ is a diagonal matrix with coefficient between $x_i$ and $u_i,i=1,\dots,n$.

\subsection{Attention Mechanism}
The attention mechanism, central to Transformer models \cite{transformer}, computes a weighted representation of input elements based on their relevance to the query. It linearly transformed the input ${Y}\in\sR^{n\times d}$ into three parts, $i.e.$, queries ${Q=YW_Q}\in \sR^{n\times d_k}$, keys ${K=YW_K}\in \sR^{n\times d_k}$, and values ${V=YW_V}\in \sR^{n\times d_v}$, where $n$ denotes the sequence length and $d,d_k,d_v$ represents the dimensions of inputs, queries(keys) and values. The scaled dot-product attention is applied on ${Q, K, V}$ and can be formulated as:
\begin{equation}
\text{Attention}({Q, K, V}) = \text{softmax}\left(\frac{{QK}^\top}{\sqrt{d_k}}\right){V},
\end{equation}
where the softmax function makes the rows of $\frac{{QK}^\top}{\sqrt{d_k}}$ sums to $1$. The attention mechanism enables the model to focus on the most relevant parts of the input sequence, making it highly effective for sequence modeling tasks.

\subsection{Link between Structural Causal Model and Attention Mechanism}
The previous work \cite{causal-transformer} provides a bridge between self-attention and causal discovery. To be more specific, this section describes the details of why the attention mechanism can be considered to estimate the observed nodes of a SCM. 
{\color{red} }

Initially, we explain that the covariance over the outputs of the attention has a similar formulation to the covariance over observed nodes in an SCM. For a linear SCM, ${X}={BX}+ {\Lambda U}$, then 
\begin{equation}\label{eq1}
    {X = (I-B)}^{-1}{\Lambda U}
\end{equation}
where equation~(\ref{eq1}) represents a system with the outputs $X$, inputs $U$ and weights ${(I-B)}^{-1}$. The covariance matrix of the output is
\begin{equation}\label{eq2}
\begin{aligned}
    Cov({X})&=\mathbb{E}[({X}-\mu_{X})({X}-\mu_{X})^\top]\\
    &={({I}-{B})}^{-1}{\Lambda} Cov({U}) ({({I}-{B})}^{-1}{\Lambda})^\top,
\end{aligned}
\end{equation}
where $\mu_{X}={({I}-{B})}^{-1}{\Lambda}\mu_{U}$, and $Cov(\cdot)$ means the covariance matrix.

On the other hand, an attention layer estimates the attention matrix ${A}$ and a values matrix ${V}$ from embeddings ${Y}$. The output embeddings are ${Z}={AV}$
If we view ${V}$ as a random variable with mean ${\mu_V}$ and covariance $Cov({V})$, then the output ${Z}$ also has a covariance matrix given by
\begin{equation}\label{eq3}
\begin{aligned}
    Cov({Z})&={\mathbb{E}[({Z}-\mu_{Z})({Z}-\mu_{Z})^\top]}={A}Cov({V}){A}^\top,
\end{aligned}
\end{equation}
where ${\mu_Z}={A\mu_V}$. Comparing this to $Cov({X})$ in equation~\eqref{eq2} shows a striking similarity: in each case, the output covariance is formed by applying a linear transformation (either ${({I}-{B})}^{-1}{\Lambda}$ in the SCM or ${A}$ in the attention mechanism) to the input covariance ($Cov({U})$ or $Cov({V})$, then multiplying by its transpose.

Hence, the learned weight ${A}$ in the attention mechanism plays the same linear mapping role as ${({I}-{B})}^{-1}{\Lambda}$ in SCM. In the SCM, the term ${\Lambda}$ captures how exogenous inputs ${U}$ influence the system, and ${({I}-{B})}^{-1}{\Lambda}$ encodes the dependence among observed variables. Analogously, in the attention mechanism, ${V}$ represents the content being passed among tokens (similar to “inputs”), while ${A}$ acts as a learned “adjacency matrix” that decides how different tokens interact with each other. Consequently, the attention mechanism can be viewed as estimating, in a data-driven way, relationships among observed nodes—mirroring how ${({I}-{B})}^{-1}{\Lambda}$ describes the relationships in a linear SCM.
\section{Method}
\label{sec:method}

The overview of CausalRec is depicted in Fig. \ref{fig:causalrec}. The proposed model integrates a Causal Discovery Block to identify causal relations. To incorporate causality into the recommendation and further enhance sequential recommendation performance, we propose a CausalBooster that incorporates the discovered causal relations into the attention mechanism. Consistent with previous sequential recommendation methods, we include an Embedding Layer and a Prediction Layer to learn item embeddings and generate logits for the output (see details in Appendix \ref{sec:app-emb-pred}). In summary, CausalRec comprises four main modules: the Embedding Layer, Causal Discovery Block, CausalBooster, and Prediction Layer.

\begin{figure}[!htbp]
	\includegraphics{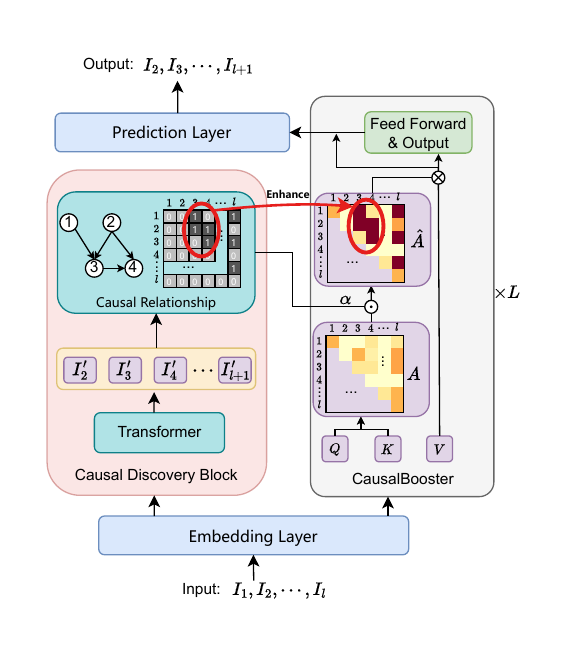}
	\caption{Architecture of our proposed CausalRec. We propose a Causal Discovery Block to learn user behavior causality and incorporate it into our model with CausalBooster.}
 \Description{}
	\label{fig:causalrec}
\end{figure}

\subsection{Causal Discovery Block}
\label{sec:causal-discover}

Identifying causal relations in recommendations faces identifiability concerns, which denote that multiple SCMs can produce the same observed data distribution, making it impossible to single out one true causal structure. Previous work, such as the causer \cite{causer}, has attempted to solve the issues by clustering methods, but failed. To overcome these limitations, we propose an Item-level Causal Discovery Block and provide the identifiability proof.

\subsubsection{Item-level Causal Discovery Block}

Different from the causer, we directly mine the item's causal relationship. We leverage the SCM and estimate causal structures with a covariance matrix:


\begin{equation}\label{eq:cov}
    Cov(i,j) = \frac{1}{N}\sum_{k=1}^Nx_{ki}x^\top_{kj}\in\mathbb{R}^{n\times n},
\end{equation}
where $x_{i} \in \sR^n$ is the vector of items in a user’s sequence to avoid confusion with the identity matrix I symbol, $n$ is the input sequence length, and $N$ denotes the batch size. To address the identifiability issue, we use the final layer representation of the transformer as samples to calculate covariance. As illustrated in Section \ref{sec:Preliminary}, the attention mechanism can estimate the linear SCM observed nodes, and the layer normalization mechanism naturally imposes an equal noise variance assumption across different dimensions. Since $X$ follows linear SCM equations $X=BX+\Lambda U$, properties of the exogenous variables $U$—namely independence and finite variance—enable us to express $Cov(X)$ in terms of $B$ and $\Lambda$. Therefore, $B$ can be estimated using $Cov(X)$. Additionally, the incorporation of the transformer introduces the linear SCM and equal noise variance assumptions. These additional assumptions ensure that, with acyclicity and sparse constraints on $Cov(X)$ in continuous optimization, the DAG within the data generation process is theoretically identifiable \cite{boost}. This is due to the attention mechanism and the property of Layer Normalization, thereby resolving the identifiability issue in prior methods. 

\subsubsection{Identifiability Analysis}

In this section, we provide the causal identifiability (DAG can be uniquely determined) of the Causal Discovery Block. As \cite{causal-transformer} provides a bridge between self-attention and causal discovery, and \cite{transformer} demonstrates that the transformer can learn causal structure with gradient descent, here we show the identifiability of the outputs from the attention mechanism with the unique property of Layernorm.
To the beginning, we provide the causal identifiability of linear SEM as Lemma~\ref {lm1}: if the exogenous variables have equal variances, then the directed acyclic graph of this linear SEM can be uniquely identifiable. Here we denote $\mathcal{G}(B)$ as directed graph with adjacent weight matrix $(\beta_{j,k})B\in\mathbb{R}^{n\times n}$, vertex set $V=\{1,\dots,n\}$ and edge $E(B)$ be the support of $B$ ($E(B)=\{(k,j):\beta_{j,k}\neq 0\}$). Let $X\sim(B_X,\sigma^2_X)$ denote the exogenous variables of linear SEM that have equal variance $\sigma^2_X$.

\begin{lemma}[Identifiability Conditions for Linear SEMs \citep{Chen_2019,JMLR-Park}]\label{lm1}
Let \( P(X), P(Y) \) be generated from a linear SEM (\ref{eq-scm}) with DAG \( \mathcal{G}(B_X), \mathcal{G}(B_Y) \) and true ordering \( \pi_X, \pi_Y \), $X\sim(B_X,\sigma^2_X)$, $Y\sim(B_Y,\sigma^2_Y)$ with both \( G_X, G_Y \) directed and acyclic. If $\text{var}(X)=\text{var}(Y)$, then $\mathcal{G}(B_X)=\mathcal{G}(B_Y)$, $B_X=B_Y$ and $\sigma^2_X=\sigma^2_Y$.
Then, DAG \( G \) is uniquely \textit{identifiable} if exogenous variables have equal variances.
\end{lemma}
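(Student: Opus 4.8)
The plan is to prove that, under the equal-variance restriction, the covariance matrix $\Sigma := \mathrm{Cov}(X)$ alone determines the triple $(\mathcal{G}(B),B,\sigma^2)$; the hypothesis $\mathrm{var}(X)=\mathrm{var}(Y)$, i.e. $\Sigma_X=\Sigma_Y$, then forces $(\mathcal{G}(B_X),B_X,\sigma_X^2)=(\mathcal{G}(B_Y),B_Y,\sigma_Y^2)$. Writing the SEM as $X=BX+\Lambda U$ with $\Lambda=\sigma I$ and $\mathrm{Cov}(U)=I$ in the equal-variance case, and using that $I-B$ is invertible because $\mathcal{G}(B)$ is acyclic, I get $X=(I-B)^{-1}\Lambda U$ and
\begin{equation}
\Sigma = \sigma^2 (I-B)^{-1}(I-B)^{-\top}, \qquad \Omega := \Sigma^{-1} = \sigma^{-2}(I-B)^\top (I-B).
\end{equation}
Every step below reads quantities off $\Sigma$ (equivalently $\Omega$), so no assumption beyond second moments is used: ``conditional variance'' may be read as the residual variance of the best linear predictor, which is a function of $\Sigma$ alone.

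The crucial step is to recover a reverse-topological ordering from $\Omega$. For a sink node $t$ (no children), its noise $\epsilon_t$ is independent of every other coordinate, so regressing $X_t$ on $X_{-t}$ leaves residual variance exactly $\sigma^2$; equivalently $1/\Omega_{tt}=\sigma^2$. For a general node $j$, reading the diagonal of $\Omega=\sigma^{-2}(I-B)^\top(I-B)$ gives $\Omega_{jj}=\sigma^{-2}\bigl(1+\sum_{i:\,j\to i}\beta_{ji}^2\bigr)$, so $1/\Omega_{jj}<\sigma^2$ as soon as $j$ has a genuine outgoing edge. Hence the sinks are exactly the maximizers of the partial variance $1/\Omega_{jj}$, and the common noise level is identified as $\sigma^2=1/\min_j \Omega_{jj}$. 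Since deleting a sink leaves the structural equations of the remaining variables unchanged, the marginal law of the remaining coordinates is again an equal-variance linear SEM with the same $\sigma^2$ and the induced sub-DAG; peeling sinks recursively recovers the full reverse ordering and therefore the edge set $E(B)$ and the DAG $\mathcal{G}(B)$.

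With the parent set of every node in hand, $B$ and $\sigma^2$ follow as unique least-squares objects attached to $\Sigma$: the nonzero entries in row $j$ of $B$ are the regression coefficients of $X_j$ on $X_{\mathrm{Pa}(j)}$, a deterministic function of the corresponding blocks of $\Sigma$, and every residual variance equals the single value $\sigma^2$ already found. Because each operation---locating the sink set, deleting it, regressing on recovered parents---is a deterministic function of $\Sigma$ that correctly identifies the graph-theoretic features of the underlying DAG, the recovered triple does not depend on any choices made during the recursion. Applying this common map to $\Sigma_X=\Sigma_Y$ yields $\mathcal{G}(B_X)=\mathcal{G}(B_Y)$, $B_X=B_Y$, and $\sigma_X^2=\sigma_Y^2$; the closing sentence of the lemma is the immediate reformulation of this fact as identifiability.

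I expect the main obstacle to be the inductive bookkeeping rather than any single inequality. One must verify that marginalizing out a sink really returns an equal-variance SEM on the remaining nodes, so that the inductive hypothesis applies; handle ties when several sinks attain the maximal partial variance; and check that the final recovered structure is invariant to the order in which sinks are removed. A secondary technical point is the strict inequality $1/\Omega_{jj}<\sigma^2$ for non-sinks, which requires the edges of $\mathcal{G}(B)$ to carry nonzero weight; the support convention $E(B)=\{(k,j):\beta_{j,k}\neq 0\}$ supplies exactly this, and a node all of whose out-edges had zero weight would not be a parent in $\mathcal{G}(B)$ anyway, so it cannot affect the recovered DAG.
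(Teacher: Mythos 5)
Your proof is correct, but it cannot be compared line-by-line with the paper's own argument for a simple reason: the paper never proves Lemma~1. The lemma is imported as a known result (the citations to Chen et al.\ and Park) and used purely as a black box inside the proof of the subsequent proposition on self-attention with LayerNorm. So the meaningful comparison is with those cited works, and there your route is genuinely different. The cited papers work \emph{top-down} with conditional variances: a source is identified as the minimizer of the variance conditioned on the already-recovered prefix of the ordering, and the ordering is grown from the roots. You work \emph{bottom-up} through the precision matrix (the Ghoshal--Honorio style argument): from $\Omega=\Sigma^{-1}=\sigma^{-2}(I-B)^{\top}(I-B)$ you get $\Omega_{jj}=\sigma^{-2}\bigl(1+\sum_{i:\,j\to i}\beta_{ji}^{2}\bigr)$, so sinks are exactly the minimizers of $\Omega_{jj}$, the common noise level is $\sigma^{2}=1/\min_{j}\Omega_{jj}$, and peeling the sink set leaves a strictly smaller equal-variance SEM (parents of surviving nodes are never sinks, so the sub-model is autonomous and its covariance is the corresponding submatrix of $\Sigma$). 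What your route buys is worth noting: it shows the triple $(\mathcal{G}(B),B,\sigma^{2})$ is a deterministic function of the covariance matrix alone, with no Gaussianity assumption, which is precisely the form the paper needs later when it identifies the DAG from $\mathrm{Cov}(Z)$ in the proposition.

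Two steps should be tightened. First, ``peeling recovers the reverse ordering and therefore the edge set'' is a non sequitur as written: an ordering does not determine edges, and your later sentence regresses $X_{j}$ on $X_{\mathrm{Pa}(j)}$, which presupposes the parent set you are trying to recover. The repair is standard: regress $X_{j}$ on all nodes removed \emph{after} $j$ in the peeling (this set contains $\mathrm{Pa}(j)$ and no descendant of $j$, so the support of the population regression coefficients equals $\mathrm{Pa}(j)$ and the coefficients are the corresponding row of $B$); alternatively, observe that when $j$ is a sink of the current sub-model its precision row lists its parents directly, since $\Omega_{jk}=-\sigma^{-2}\beta_{kj}$ for $k\neq j$ when $j$ has no outgoing edges. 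Second, in the recursion the relevant precision matrix is the inverse of the sub-block of $\Sigma$ on the surviving nodes, not a sub-block of $\Omega$; your marginalization argument justifies this, but it deserves an explicit sentence since conflating the two is the classic error in this style of proof.
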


A critical insight is that the self-attention mechanism, particularly in our context, can be formulated as a linear SEM, thus satisfying the preconditions of Lemma~\ref{lm1}. Specifically, the Layer Normalization step ensures that the exogenous variables (derived from input embeddings) have identical variance. This allows us to formally state the identifiability of the learned causal structure.

\begin{proposition}{(Causal Identifiability of Self-Attention with LayerNorm)} Let $P(Z)$ be generated from $Z=BZ+DV$ with pre-Layer normalization of diagonal matrix $D=\text{diag}(\frac{1}{\|V_1\|_2},\dots,\frac{1}{\|V_n\|_2})$ to make exogenous variables $\tilde{V}=DV$ with equal variances, $\mathcal{G}(Z)$ can be uniquely identified.
\end{proposition}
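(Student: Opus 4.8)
The plan is to recognize the generative model $Z = BZ + DV$ as an instance of the linear SEM in Equation~(\ref{eq-scm-m}), with the exogenous term $\Lambda U$ replaced by $\tilde{V} = DV$, and then to reduce the whole statement to verifying the single hypothesis required by Lemma~\ref{lm1}: that the exogenous variables have equal variances. First I would fix the correspondence by writing $\tilde{V} = DV$ and noting that $Z = BZ + \tilde{V}$ admits the closed form $Z = (I-B)^{-1}\tilde{V}$, exactly mirroring Equation~(\ref{eq1}); consequently $Cov(Z) = (I-B)^{-1}Cov(\tilde{V})(I-B)^{-\top}$ as in Equation~(\ref{eq2}). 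This identifies $B$ as the weighted adjacency matrix of $\mathcal{G}(Z)$ and $\tilde{V}$ as the exogenous noise, while acyclicity of $\mathcal{G}(B)$ is supplied by the acyclicity constraint enforced during the continuous optimization of the Causal Discovery Block, and mutual independence of the components of $\tilde{V}$ is inherited from the SCM definition.

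The crux of the argument is the equal-variance step. Here I would show that the diagonal matrix $D = \text{diag}(1/\|V_1\|_2,\dots,1/\|V_n\|_2)$ renders every exogenous component $\tilde{V}_i = V_i/\|V_i\|_2$ with the same variance. Since each normalized token vector satisfies $\|\tilde{V}_i\|_2 = 1$, and since pre-LayerNorm centers each token representation to zero mean, the per-node variance evaluates to $\tfrac{1}{d_v}\sum_j \tilde{V}_{ij}^2 = 1/d_v$, a constant independent of $i$. Hence $\text{var}(\tilde{V}_i) = \sigma^2 := 1/d_v$ for all nodes $i$, so the equal-variance precondition $\tilde{V}\sim(B,\sigma^2)$ demanded by Lemma~\ref{lm1} is met purely as a structural consequence of the LayerNorm scaling.

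With equal variances in hand, I would conclude by the two-model uniqueness argument underlying Lemma~\ref{lm1}. Suppose two candidate linear SEMs $Z = B_1 Z + \tilde{V}^{(1)}$ and $Z = B_2 Z + \tilde{V}^{(2)}$, both with the equal-variance exogenous terms guaranteed above, generate the same distribution $P(Z)$ and hence the same $\text{var}(Z)$. Lemma~\ref{lm1} then forces $\mathcal{G}(B_1) = \mathcal{G}(B_2)$, $B_1 = B_2$, and equal noise variances, so the DAG $\mathcal{G}(Z)$ consistent with $Cov(Z)$ is unique and therefore identifiable. The hard part, and the step most deserving of care, is the equal-variance claim: one must be explicit about whether the variance is taken across the $d_v$ embedding dimensions or over the sampling distribution, and must confirm that the normalization $D$ equalizes the distributional variances that Lemma~\ref{lm1} actually constrains rather than merely equalizing vector norms. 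The independence assumption on the $\tilde{V}_i$ inherited from the SCM also has to be carried through faithfully for the lemma to apply without gaps.
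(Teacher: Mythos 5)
Your proposal follows essentially the same route as the paper's own proof: rewrite the attention output as the linear SEM $Z = BZ + \tilde{V}$ with $\tilde{V} = DV$, solve to get $Z = (I-B)^{-1}\tilde{V}$ and the covariance factorization $Cov(Z) = (I-B)^{-1}Cov(\tilde{V})((I-B)^{-1})^\top$, argue that LayerNorm equalizes the exogenous variances, and invoke Lemma~\ref{lm1} to conclude uniqueness of the DAG. If anything, you are slightly more explicit than the paper --- which quietly reduces to embedding dimension $1$ and asserts unit variance --- in flagging that equalizing the vector norms $\|\tilde{V}_i\|_2$ is not literally the same as equalizing the distributional variances that Lemma~\ref{lm1} constrains; this is a gap the paper's proof shares and likewise does not fully close.
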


\begin{proof}
The output $Z$ of the self-attention layer is a linear transformation of its input values $V$. For each output $Z_i$, we have:
$Z_i = \sum_{j=1}^{n} \text{softmax}\left(\frac{Q_i K_j^T}{\sqrt{d_k}}\right) V_j = \sum_{j=1}^{n} A_{ij} V_j$,
where $A$ is the attention matrix. This confirms that the relationships among the components of $Z$ are linear.

We can re-express this relationship with a linear SEM. For simplicity and without loss of generality, let's consider the case where the embedding dimension of $V$ is 1. The structural model for the item representations $Z$ can be written as $Z = BZ + \tilde{V}$, where the input value matrix $V$ serves as the basis for the exogenous variables. The pre-Layer Normalization, represented by the diagonal matrix $D$, transforms $V$ into $\tilde{V} = DV$. This normalization ensures that each component $\tilde{V}_i$ has unit variance, making them independent and identically distributed (i.i.d.) noise terms. By assuming that the causal graph $\mathcal{G}(B)$ is a DAG, the matrix $(I-B)$ is invertible. This allows for a unique solution for $Z$:
$Z = (I-B)^{-1}\tilde{V}$
This equation describes a linear SEM where the output $Z$ is generated from exogenous variables $\tilde{V}$ that have equal variances. According to Lemma~\ref{lm1}, the underlying DAG $\mathcal{G}(B)$ is therefore uniquely identifiable from the covariance matrix of $Z$, which is given by $Cov(Z) = (I-B)^{-1} Cov(\tilde{V}) ((I-B)^{-1})^\top$. This completes the proof.
\end{proof}
In summary, the combination of self-attention (with causal masking ensuring the DAG property) and Layer Normalization satisfies the conditions for causal identifiability in linear SEMs. The autoregressive structure of the attention matrix $A$ determines the causal ordering, and LayerNorm ensures the homogeneity of exogenous noise variances, thereby guaranteeing that the causal graph over the output representations $Z$ is identifiable. Then, once we find a DAG during the training procedure, we say that the causal relationship matrix $R$ in section~\ref{sec:causal-layer} is uniquely identifiable.

To find a DAG, we introduce a general method to add acyclicity and sparse constraints in our optimization as NOTEARS~\cite{notears-nonlinear}. One can also apply other acyclicity constraints which has less time complexity. The loss consists of two components:(1) the acyclicity constraint, which is to ensure that the learned causal graph is a directed acyclic graph. (2) Sparse Regularization, which is to encourage sparsity in causal relationships. The acyclicity constraint comes from the taylor expansion $trace(e^A)$$=trace(I)$$+trace(A)$$+trace(A^2)$$+\cdots$, where $A=W\odot W,W\in \sR^{n\times n}$. Suppose $W$ represents the adjacency matrix corresponding to graph $\gG$ with $n$ nodes and $(A^k)_{ij}$ denotes the number of k-step paths from node $i$ to node $j$. Then $n=trace(I)=trace(A)+trace(A^2)+\cdots$ represents that there is no path from node $i$ to node $i$, which indicates the graph $\gG$ is a DAG. Therefore, the acyclicity constraint can be written as below:
\begin{equation}
    \text{trace}(e^{W \odot W})=n,
\end{equation}
where $W$ denotes the covariance matrix $Cov(X)$, $\odot$ represents the Hadamard product, and $n$ is the input sequence length. The idea of sparse regularization comes from the fact that in the user behavior sequence, the causal item corresponding to the target item should be sparse. Therefore, we incorporate the $L_1$ matrix norm on the covariance matrix as our sparse regularization to encourage sparse causal relations among items.



\subsection{CausalBooster}
\label{sec:causal-layer}

The CausalBooster is designed to integrate causality into the sequential recommendation. It achieves this by stacking multiple CausalBoost Attention (CBA) Layers, each consisting of a CBA Layer and a feed-forward network. This structure allows the model to enhance relevant behaviors without discarding important information, addressing key challenges in incorporating causal relations.

\subsubsection{CausalBoost Attention Layer} 


Previous approaches \cite{causer} would filter out items with low causal relevance based on a learned causal matrix. However, that risks discarding user interest in sequences.

Instead, we introduce a multiplicative enhancement of attention. The original attention is calculated by $A = \text{softmax}(\frac{QK^\top}{\sqrt{d}})$ and $\mathrm{Z}=\mathrm{AV}$. In Section \ref{sec:causal-discover}, the causal relationships are derived from $B$ where $X=BX+\Lambda U$ denotes the linear SCM among items (also corresponding to $V$). Therefore, we form an adjusted attention matrix to apply the causal relation matrix on the unnormalized attention weight. This ensures no critical behavior is outright discarded and achieves numerical stability for smooth training:
\begin{equation}
    \Tilde{A}^l = A^l \odot (\mathbf{1}_n\mathbf{1}_n^\top + \alpha R  ),
\end{equation}
where $\odot$ denotes the Hadamard product or element-wise product, $A^l \in \sR^{n_{max} \times n_{max} }$ is the attention matrix of the $l_{\text{th}}$ CBA Layer, $\mathbf{1}_n:=\{1,\dots,1\}^\top\in\mathbb{R}^n$ means the all-ones vector and $\alpha$ is a scalar hyperparameter controlling how strongly causal relationships influence the attention. $R \in \sR^{n\times n}$ denotes the learned causal relationship matrix (using the method in Section \ref{sec:causal-discover}) where $1$ denotes the items have a causal relation and $0$ denotes not. We then apply a standard attention softmax and the prefix mask: 
\begin{equation}
    Z^l = \text{softmax}(\mathcal{M} + \Tilde{A}^l)V^l,
\end{equation}
where $V^l = X^l W^l_V$ represents the values in the attention mechanism, with $X^l$ as the input to the $l_\mathrm{th}$
$$
\mathcal{M}(x,y)=\begin{cases}
	0&if\;\;\;\;x\le y,		\\
	-\infty &otherwise.		\\
\end{cases} 
$$
\subsubsection{Feed-Forward Network and Output Layer} 
To enrich the capabilities of the model's representation, a two-layer point-wise feed-forward network is incorporated after each CBA Layer. This network introduces nonlinearity and facilitates interaction across latent dimensions. It is formulated as:
\begin{equation}
    \hat{X}^l = (\text{ReLU}(\Tilde{X}^lW^l_1+b^l_1))W^l_2+b^l_2,
\end{equation}
where $W^l_1,W^l_2 \in \sR^{D \times D}$ are the learnable parameter matrix and $b^l_1,b^l_2 \in \sR^{D}$ is the learnable parameter vector. Following the transformer architecture \cite{transformer}, we apply residual connections, layer normalization, and dropout layers to alleviate overfitting. The output of the layer is defined as:
\begin{equation}
    X^{l+1}=\text{LayerNorm}(X^l + \text{Dropout},(\hat{X}^l)),
\end{equation}
where the Dropout and Layer Normalization are defined as:
\begin{equation}
\begin{aligned}
    \text{Dropout}(\vx) &= \vr \odot \vx, \\
    \text{LayerNorm}(\vx)&=\theta_1 \odot \frac{\vx-\mu}{\sqrt{\sigma^2+\epsilon}} + \theta_2, 
\end{aligned}
\end{equation}
where $\mu$ and $\sigma$ are the mean and variance of $\vx$, $\theta_1$ and $\theta_2$ are learned scaling factors and bias terms. $\vr$ is the random vector and $\vr_i \sim \text{Bernoulli}(p)$ with probability parameter $p$.



\subsection{Embedding Layer and Prediction Layer}
\label{sec:app-emb-pred}

\subsubsection{Embedding Layer}


Following previous practices, we first truncate the given user behavior interaction sequence $o={I_1, I_2,\cdots, I_{n}}$ by remove the early item $I_i, i > n_{max}$ and pad empty items for a short sequence $o_j,n < N_{max}$ to obtain fixed sequence set $\gO=\{u_k, I_1^k, I_2^k, \cdots, I_{N_{max}}^k\}_{k=1}^n$, where $n_{max}$ denotes the maximum sequence length. We use an item embedding matrix $M \in \sR^{|\gV|\times D}$, where $D$ denotes the hidden representation dimension, to define the embedding of the sequence $E^k=M_{o_k}$. To make our sequence more sensitive to the position of the sequence, we define the positional embedding matrix $P\in \sR^{N_{max} \times D}$ and add it to the sequence embedding. Our embedding layer is written as below:
\begin{equation}
    E^{k} = \text{Dropout}(M_{o_k}+P)\ .
\end{equation}

\subsubsection{Prediction Layer}

The Prediction Layer serves as the final component of the CausalRec. Suppose the user’s preference score for an item $i$ is calculated based on the interaction sequence processed by the CausalBooster. The calculation is performed using the dot product between the embedding of item $i$ and the output of the Encoder, which quantifies the similarity between the user’s preference and the item’s representation. It can be defined as below:
\begin{equation}
    \hat{y}_i=e^T_iI^L_{n_k},
\end{equation}
where $\hat{y}$ represents the output logits, $e_v$ denotes the representation of item $i$ which comes from the item embedding matrix $M$ and $I^L_{n_k}$ is the output representation of $L_{\rm{th}}$ CausalBooster layer.

\subsection{Training Loss}

Given the training, user interaction set $\gO_{\text{train}} = \{u_k, I_{1}^{k}, I_2^k, \cdots$, $ I_{n_k}^k\}_{k=1}^{n}$, the primary goal is to mine causal relations to enhance recommendation performance. To achieve this, the loss function is composed of two components: (1) Recommendation Loss ($\gL_{\text{rec}}$): Captures the model's ability to predict the next item in a user sequence. (2) DAG Constraint ($\gL_{DAG}$): Maintaining sparsity and acyclicity constraints in the learned causal graph. The final loss function is as below: 
\begin{equation}
    \gL = \gL_{rec} + \lambda * \gL_{L_1} + \gL_{DAG},
\end{equation}
where $\lambda$ is the penalty coefficient for the $L_1$ sparsity term. This unified loss ensures that the model learns accurate recommendations while simultaneously identifying reliable causal relationships.

\subsubsection{Recommendation Component}

For the recommendation component, we treat the sequential recommendation as a next-item prediction problem and use a cross-entropy loss function, which is widely adopted in related tasks \cite{ce1,ce2,ce3}. It is defined as:
\begin{equation}
    \begin{aligned}
\gL_{rec} = -\sum_{o_k \in \gO_{train}}\sum_{i}^{n_k}\sum_{s \in \gS}[ y_{i,s} log (\sigma(\hat{y}_{i,s})) + \\
(1-y_{i,s})log(1-\sigma(\hat{y}_{i,s})) ],
\end{aligned}
\end{equation}
where $\sigma(\cdot)$ denotes the sigmoid function, $\gS$ is the item set, and $y_{i,s}$ means whether item $s$ is the next item in the user sequence $u_k$.
\subsubsection{DAG Component}
As described in Section \ref{sec:causal-discover}, the DAG component consists of two parts: an acyclicity constraint and an $L_1$ sparse penalty. For the user $u_k$ with its learned causal $W_k$, incorporating the L1 penalty $\gL_{L1}=\sum_{o_k \in \gO_{train}} ||W_k||_1$ is simple. However, integrating the acyclicity constraint is the opposite, as the acyclicity constraint $\text{trace}(e^{W \odot W})=n$ is nonconvex. Hence, followed by the NOTEARS, we transform it into an unconstrained subproblem:
\begin{equation}
    \gL_{DAG} =  \sum_{o_k \in \gO_{train}} \frac{\rho}{2}|h(W_k)|^2 + \beta |h(W_k)|,
\end{equation}
where $h(W_k)=\text{trace}(e^{W_k \odot W_k})-n$, and $W_k$ is the learned causal graph for user $u_k$. The $\beta$ follows the rule $\beta \leftarrow \beta + \rho \kappa$ to update after each epoch, where $\kappa=mean_{o_k \in \gO_{train}} h(W_k)$. Following NOTEARS \cite{notears}, we set $\rho$ update rule $\rho \leftarrow \rho * \gamma_1$ if $\kappa \geq \gamma_2 \kappa^-$ after each epoch, where $\kappa^-$ denotes the $\kappa$ in the last epoch and initially set to 0.

\subsection{Complexity Analysis}
\subsubsection{Model Complexity}
The CausalBoost model is based on the Transformer architecture. Its computational complexity is 
$O(bl^2d+bd)$, where the first term corresponds to the attention mechanism’s complexity and the second term corresponds to the feed-forward network (FFN) complexity. The variables $b$, $l$, and $d$ represent the batch size, sequence length, and hidden dimension, respectively.
\subsubsection{Loss Complexity}
The primary contributor to the loss computation complexity is the acyclicity constraint, which relies on the matrix exponential. Prior methods such as NOTEARS\cite{notears}, NOFEARS\cite{nofears}, and NOBEARS\cite{nobears} (see Table \ref{tab:dag-complexity}) introduce optimizations to reduce computation. Our approach follows NOTEARS to compute the constraint, which has the worst complexity of $O(l^3)$. This can achieve the theoretically optimal complexity with our covariance matrix, Eq.(\ref{eq:cov}). As $l$ is set to $200$ and $b$ is set to $256$, the constraint complexity is similar to the model complexity, and compared with SASRec, our runtime increases by only $0.35$s per epoch on average. It should be noted that our approach can be further optimized by blocking the matrix with a permutation matrix and computing constraints on each block matrix, whose complexity is $O(l^2)$ or $O(\sum_i^m r_i^3)=O(r^3)$. $r_i$ is the rank of the block matrix and $\sum_i^m r_i=l,\ r_i < l$ with the biggest rank $r=\mathrm{max} \{r_i\}_{i=1,\dots,m}$, and the number of blocks $m$. The time complexity comparison among our approach, NOTEARS, and its variants is shown in the Table~\ref{tab:dag-complexity}. 

\begin{table}[htbp]
\caption{Time complexity of different acyclicity constraints (worst / average / best).}
\label{tab:dag-complexity}
\centering
\resizebox{0.95\linewidth}{!}{
\begin{tabular}{l|c|c|c|c}
\hline
Case & NOTEARS & NOBEARS & NOFEARS & Ours (covariance matrix) \\
\hline
Worst   & $O(\ell^{3})$ & $O(\ell^{3})$ & $O(\ell^{3})$ & $O(\ell^{3})$ \\
Average & $O(\ell^{3})$ & $O(\ell^{2})$ & $O(\ell^{3})$ & $O(\ell^{3})$ \\
Best    & $O(\ell^{3})$ & $O(\ell^{2})$ & $O(\ell^{3})$ & $O\!\left(r^{3}\right)$ \\
\hline
\end{tabular}}
\end{table}
\subsubsection{Overall Complexity}
As above, the total best complexity mainly comes from the attention mechanism: $O(bl^2d+bd+r^3)=O(bl^2d)$.

\section{Experiments}
\label{sec:exp}

\begin{table*}[htbp]
\caption{{Overall comparison between baselines and our models. The best performance is highlighted in bold, and the suboptimal results are shown with a dashed line below. All the numbers are percentage values with "\%" omitted.}}
\label{tab:main-result}
\centering
\resizebox{0.9\linewidth}{!}{
\begin{tabular}{l|cc|cc|cc|cc}
\hline
Datasets & \multicolumn{2}{c|}{Movielen-1m} & \multicolumn{2}{c|}{Foursquare} & \multicolumn{2}{c|}{LastFM}     & \multicolumn{2}{c}{KGRec-music} \\ \hline
Metric@10 & NDCG$\uparrow$  & {HR$\uparrow$}    & NDCG  & {HR}    & NDCG  & {HR}    & NDCG           & HR             \\ \hline
BPR         & 7.33  & {16.34} & 14.20 & {29.09} & 11.22 & {17.00} & 16.54          & 33.03          \\ \hline
GRU4Rec     & 6.45  & {13.54} & 22.87 & {38.69} & 4.80  & {10.46} & 14.61          & 22.43          \\ \hline
STAMP       & 21.46  & {{42.16}}  & 17.41  & {35.18}  & 3.50  & {7.06}  & 38.27           & 73.68           \\ \hline 
Causer      & 6.51  & {14.07} & 4.37  & {10.16} & 5.04  & {10.78} & 12.53          & 22.95          \\ \hline
VTRNN       & 19.88 & {44.40}  & 21.09 & {30.93} & 4.74  & {9.82}  & 26.97          & 47.64          \\ \hline
SASRec      & 59.18  & \underline{82.25} & 20.75  & {39.12} & \underline{13.57}  & {18.95} & \underline{75.37}          & \underline{93.21}          \\ \hline

BSARec      & \underline{60.75} & {78.61} & \underline{24.45} & \underline{39.24} & 13.46 & \underline{19.03} & 74.55          & 93.13          \\ \hline
Ours &
  \textbf{72.40\textsubscript{{\scriptsize\textbf{$\uparrow$ 11.65}}}} &
  {\textbf{88.94}\textsubscript{{\scriptsize\textbf{$\uparrow$ 6.69}}}} &
  \textbf{39.94\textsubscript{{\scriptsize\textbf{$\uparrow$ 15.49}}}} &
  {\textbf{53.89}\textsubscript{{\scriptsize\textbf{$\uparrow$ 14.65}}}} &
  \textbf{16.36\textsubscript{{\scriptsize\textbf{$\uparrow$ 2.79}}}} &
  {\textbf{24.00}\textsubscript{{\scriptsize\textbf{$\uparrow$ 4.97}}}} &
  \textbf{79.70\textsubscript{{\scriptsize\textbf{$\uparrow$ 4.33}}}} &
  \textbf{95.77\textsubscript{{\scriptsize\textbf{$\uparrow$ 2.56}}}} \\ \hline
\end{tabular}
}
\end{table*}

In this section, we present comprehensive experiments to evaluate the effectiveness of \ours{} in sequential recommendation tasks. We first outline the experimental setup and describe the baseline models in Section \ref{sec:exp-detail}. Next, we compare \ours{} with existing baselines to assess its performance. Then, to validate the role of causality in \ours{}, we conduct both quantitative and qualitative experiments, providing further insights into the advantages of incorporating causality into sequential recommendations.

\subsection{Experiment details}
\label{sec:exp-detail}
This subsection outlines the datasets, baselines, and implementation details used in our experiments. These components provide a comprehensive foundation for evaluating the performance of \ours{} in sequential recommendation tasks.

\subsubsection{Dataset}
\label{sec:dataset}
We conduct our sequential recommendation experiments on the following real-world datasets: Movielens-1M\footnote{https://grouplens.org/datasets/movielens/} is a popular movie recommendation dataset collected from GroupLens Research, which contains user ratings on movies. LastFM\footnote{http://millionsongdataset.com/lastfm/} is a music recommendation dataset that contains user interaction with music, such as artist listening records. Foursquare\footnote{https://www.kaggle.com/datasets/chetanism/foursquare-nyc-and-tokyo-checkin-dataset} is a location-based recommendation dataset including user check-ins of restaurants in Tokyo for about 10 months. KGRec-music\footnote{https://www.upf.edu/web/mtg/kgrec} is a music recommendation dataset collected from songfacts.com and last.fm websites. Table \ref{tab:dataset-info} summarizes the statistical information of the above datasets.

\begin{table}[htbp]
\caption{ Statistics of the datasets, where the "SeqLen" denotes the average sequence length of each user.}
\label{tab:dataset-info}
\centering
\resizebox{1.0\linewidth}{!}{
\begin{tabular}{l|c|c|c|c|c}
\hline
Dataset     & \#User & \#Item & \#Interaction & \#Sparsity & \#SeqLen \\ \hline
Movielen-1m & 6040   & 3952   & 1000209       & 95.81\%  & 163.5      \\ \hline
Foursquare  & 1083   & 38333  & 91024         & 99.78\%  & 82.05      \\ \hline
LastFM      & 1892   & 17632  & 92834         & 99.72\%  & 47.08      \\ \hline
KGRec-music & 5199   & 8640   & 751531        & 98.32\%  & 142.55     \\ \hline
\end{tabular}
}
\end{table}

\subsubsection{Baselines}
\label{sec:baselines}
To evaluate the effectiveness of our model, we compare it against well-known baselines:
\begin{itemize}
    \item BPR: BPR \cite{fmpc} is a well-known recommendation model for capturing user implicit feedback. It is combined with matrix factorization to model user-item preferences.
    \item GRU4Rec: GRU4Rec \cite{gru4rec} is a sequential recommendation model based on gated recurrent units. It leverages sequential user interactions to predict the next item.
    \item STAMP: STAMP \cite{stamp} is a sequential recommendation model that emphasizes short-term user preferences while integrating long-term memory. It uses attention mechanisms to capture recent interactions.
    \item Causer: Causer \cite{causer} is a sequential recommendation model incorporating the learned causality on user behavior sequences in the clustering level. Identifying the clustering causal relationships enhances the understanding of user intent and improves recommendation accuracy.
    \item VTRNN: VTRNN \cite{vtrnn} is a sequential prediction model that combines visual and textual features using a recurrent neural network to capture multimodal contextual information.
    \item SASRec: SASRec \cite{SASRec} is a recommendation model based on self-attention mechanisms. It captures long-term dependencies in user behavior by modeling user interactions.
    \item BSARec: BSARec \cite{bsarec} is a sequential recommendation model that introduces an attentive inductive bias to enhance predictions beyond traditional self-attention mechanisms.
\end{itemize}

\begin{figure*}
    \centering
    \includegraphics{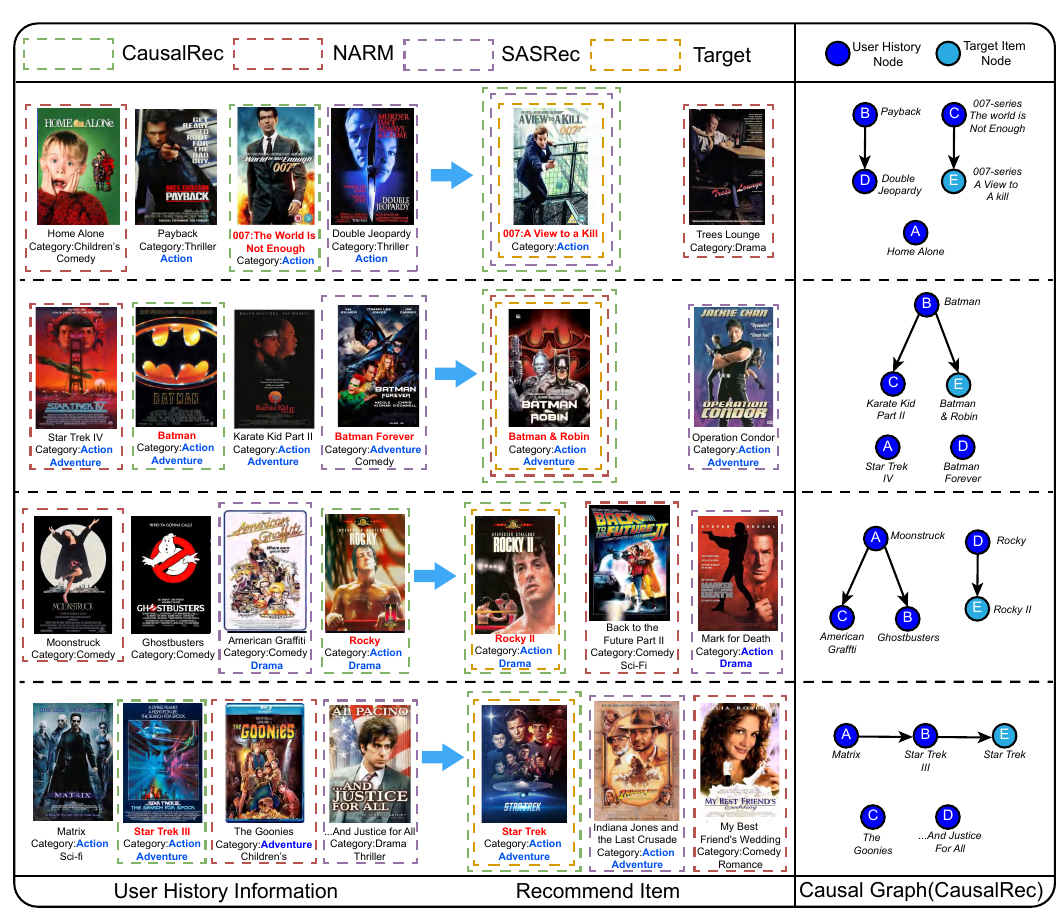}
    \caption{Visualization of user interaction histories (left), recommended items (center), and CausalRec’s discovered causal graph (right). The film title in red on the left belongs to the same series as the target item in the middle (also shown in red), which is the user’s actual choice that the model seeks to recommend. Categories shown in blue indicate the target item’s category. Different colored boxes denote explanations from CausalRec (green), NARM (red), and SASRec (purple), while the orange box highlights the ground‐truth (target) item.}
    \label{fig:enter-label}
    \Description{}
\end{figure*}

\subsubsection{Implementation details}
\label{rec:implem-details}
In our experiments, we first organize the interactions of each user according to the timestamp. Then, following the common practice, we use the last and the second last interactions of each user behavior sequence as validation sets and testing sets, while others are used as training sets. For our architecture, we use two CBA layers for the CausalRec. The optimizer is the Adam optimizer, the learning rate is $0.001$, and the batch size is $256$. The dropout rate for all datasets is set to $0.2$, and the maximum sequence length is set to $200$ for all datasets. The hyperparameters, $\alpha,\lambda$ are selected based on grid search, and the ranges are $\{10^{-8}, 10^{-7}, 10^{-6}, \cdots, 10^6, 10^7, 10^8\}$. Others in baseline models are set to the default value in the original paper.

For the evaluation, we take the widely used metrics, including HR and NDCG, for model evaluation. Specifically, suppose $A_u$ and $B_u$ are the set of items recommended to user $u$ and the ones interacted with by them in the testing set. $Z$ is the number of recommended items. $R(i)$ is the relevance score, where $R(i)=1$ denotes the $i$th item belongs to $B_u$, otherwise $R(i)=0$. Then the formulas for computing HR and NDCG are:
\begin{equation}
    \text{HR@Z}=\frac{1}{|\gU|} \sum_{u\in\gU} \frac{\sum_{i=1}^Z R(i)}{|B_u|}
\end{equation}
\begin{equation}
    \mathrm{DCG_u@Z}=\sum_{i=1}^Z\frac{R(i)}{\log_2(i+1)}
\end{equation}
\begin{equation}
    \text{NDCG@Z}=\frac{1}{|\gU|} \sum_{u\in\gU}\frac{\mathrm{DCG_u@Z}}{\max_{u\in\gU}(\mathrm{DCG_u@Z})}.
\end{equation}
In our experiment, we set $Z=10$. To avoid heavy computation on all user-item pairs, we follow the strategy in \cite{SASRec}. For each user, we sample $100$ negative items and rank them with the ground truth.

\begin{table*}
\caption{Comparison of CausalRec and its variants in terms of NDCG@10 and HR@10. All the numbers are percentage values with "\%" omitted.}
\label{tab:abla}
\centering
\begin{tabular}{l|cc|cc|cc|cc}
\hline
Dataset  & \multicolumn{2}{c|}{Movielen-1m}       & \multicolumn{2}{c|}{Foursquare} & \multicolumn{2}{c|}{Lastfm} & \multicolumn{2}{c}{KGRec-music}      \\ \hline
{{Metric@10}} & NDCG  & HR    & NDCG  & HR    & NDCG  & HR    & NDCG  & HR    \\ \hline
CausalRec(w/o   Attention)             & 62.09\textsubscript{\textcolor{cyan}{\scriptsize\textbf{$\downarrow$ 10.31}}} & 85.38\textsubscript{\textcolor{cyan}{\scriptsize\textbf{$\downarrow$ 3.56}}} & 36.34\textsubscript{\textcolor{cyan}{\scriptsize\textbf{$\downarrow$ 3.60}}} & 48.10\textsubscript{\textcolor{cyan}{\scriptsize\textbf{$\downarrow$ 5.79}}}  &  16.15\textsubscript{\textcolor{cyan}{\scriptsize\textbf{$\downarrow$ 0.21}}} & 23.26\textsubscript{\textcolor{cyan}{\scriptsize\textbf{$\downarrow$ 0.74}}} &
64.46\textsubscript{\textcolor{cyan}{\scriptsize\textbf{$\downarrow$ 15.24}}} & 91.47\textsubscript{\textcolor{cyan}{\scriptsize\textbf{$\downarrow$ 4.30}}} \\ \hline
CausalRec(w/o   Causality)             & 59.97\textsubscript{\textcolor{red}{\scriptsize\textbf{$\downarrow$ 12.43}}} & 82.92\textsubscript{\textcolor{red}{\scriptsize\textbf{$\downarrow$ 6.02}}} & 20.02\textsubscript{\textcolor{red}{\scriptsize\textbf{$\downarrow$ 19.92}}} & 35.92\textsubscript{\textcolor{red}{\scriptsize\textbf{$\downarrow$ 17.97}}} &  13.43\textsubscript{\textcolor{red}{\scriptsize\textbf{$\downarrow$ 2.93}}} & 18.30\textsubscript{\textcolor{red}{\scriptsize\textbf{$\downarrow$ 5.70}}}  &
75.32\textsubscript{\textcolor{red}{\scriptsize\textbf{$\downarrow$ 4.38}}} & 93.15\textsubscript{\textcolor{red}{\scriptsize\textbf{$\downarrow$ 2.62}}} \\ \hline

CausalRec(w/o   sparse)               & 62.96\textsubscript{\textcolor{blue}{\scriptsize\textbf{$\downarrow$ 9.44}}} & 86.47\textsubscript{\textcolor{blue}{\scriptsize\textbf{$\downarrow$ 2.47}}} & 25.04\textsubscript{\textcolor{blue}{\scriptsize\textbf{$\downarrow$ 14.90}}} & 38.96\textsubscript{\textcolor{blue}{\scriptsize\textbf{$\downarrow$ 14.93}}} &  15.28\textsubscript{\textcolor{blue}{\scriptsize\textbf{$\downarrow$ 1.08}}} & 21.61\textsubscript{\textcolor{blue}{\scriptsize\textbf{$\downarrow$ 2.39}}} &
75.81\textsubscript{\textcolor{blue}{\scriptsize\textbf{$\downarrow$ 3.89}}} & 93.38\textsubscript{\textcolor{blue}{\scriptsize\textbf{$\downarrow$ 2.39}}} \\ \hline
CausalRec & \textbf{72.40} & \textbf{88.94} & \textbf{39.94} & \textbf{53.89} & \textbf{16.36} & \textbf{24.00} & \textbf{79.70} & \textbf{95.77} \\ \hline
\end{tabular}
\end{table*}

\subsection{Recommendation Performance}
\label{sec:rec-performance}

Table \ref{tab:main-result} presents the results across four datasets. Our model achieves superior performance on all datasets, demonstrating average improvements of {$8.56\%$} in NDCG@10 and ${7.21\%}$ in HR@10 over the best baseline. Notably, the Foursquare dataset shows ${15.49\%}$ and ${14.65\%}$ improvements in NDCG and HR. These results validate the effectiveness of incorporating item-level causality into sequential recommendations, enhancing target item prediction by focusing on item causal relations. While Causer also introduces causality, its suboptimal performance stems from: (1) less effective RNN-based architectures for long sequences compared to our Transformer-based model; (2) learning pseudo-causal relationships at the cluster level from side information, which poorly represents sparse and complex real item causal relationships; and (3) a lack of identifiability in its learned causal relationships, leading to history information loss when filtering items.

\subsection{Causality Evaluation}
\label{sec:causality-eval}

To evaluate the effect of causality and correlation in improving recommendation system performance, we conduct the ablation study below. Table~\ref{tab:abla} shows the results, where CausalRec(w/o Attention) denotes we remove the attention mechanism, relying solely on the causal relation matrix, CausalRec(w/o Causality) represents we remove the causal relation matrix, making it equivalent to SASRec, CausalRec(w/o sparse) denotes removing the $L_1$ constraint in the loss function, and CausalRec(w/filter) in Table~\ref{tab:abla-filter} denotes we use a filtering strategy instead of an enhancing strategy. The filtering strategy of CausalRec(w/filter) can be written as follows:
\begin{equation}
    \Tilde{{A}}^l = \text{softmax}({A}^l + \mathcal{M}_{R})
\end{equation}
where all symbols are defined in Section \ref{sec:causal-layer}. The $\mathcal{M}_{R}$ is defined as:
\begin{align}
\mathcal{M}_{R}(x,y)=\begin{cases}
	-\infty&if\;\;\;\; {R}_{x,y}\le threhold		\\
	0 &otherwise		\\
\end{cases} 
\end{align}
where $threhold$ is set to $0.9$.

From the results, CausalRec(w/o Attention) underperforms CausalRec(w/o Causality) on KGRec-music, though causality still shows an advantage over correlation on the other three datasets. CausalRec(w/o Sparse) shows significant performance drops on Movielens-1M and Foursquare, emphasizing the necessity of sparse constraints. CausalRec(w/filter) improves performance on three datasets but significantly declines on KGRec-music (even below SASRec), indicating that improper fusion can degrade performance.
\begin{table}[!t]
\caption{Comparison of filter strategy and Causalbooster in terms of NDCG@10 and HR@10. All the numbers are percentage values with "\%" omitted.}
\label{tab:abla-filter}
\centering
\begin{tabular}{cc|c|c}
\hline
\multicolumn{2}{c|}{Model}                                &       &       \\ \cline{1-2}
\multicolumn{1}{c|}{Dataset} & {Metric@10} & \multirow{-2}{*}{CausalRec(w/filter)} & \multirow{-2}{*}{CausalRec} \\ \hline
\multicolumn{1}{c|}{}                              & NDCG & 68.05\textsubscript{\textcolor{purple}{\scriptsize\textbf{$\downarrow$ 4.35}}} & \textbf{72.40} \\ \cline{2-4} 
\multicolumn{1}{c|}{\multirow{-2}{*}{Movielen-1m}} & HR   & 85.66\textsubscript{\textcolor{purple}{\scriptsize\textbf{$\downarrow$ 3.28}}} & \textbf{88.94} \\ \hline 
\multicolumn{1}{c|}{}                              & NDCG & 30.36\textsubscript{\textcolor{purple}{\scriptsize\textbf{$\downarrow$ 9.58}}} & \textbf{39.94} \\ \cline{2-4} 
\multicolumn{1}{c|}{\multirow{-2}{*}{Foursquare}}  & HR   & 45.06\textsubscript{\textcolor{purple}{\scriptsize\textbf{$\downarrow$ 8.83}}} & \textbf{53.89} \\ \hline 
\multicolumn{1}{c|}{}                              & NDCG & 15.60\textsubscript{\textcolor{purple}{\scriptsize\textbf{$\downarrow$ 0.76}}} & \textbf{16.36} \\ \cline{2-4} 
\multicolumn{1}{c|}{\multirow{-2}{*}{Lastfm}}      & HR   & 23.20\textsubscript{\textcolor{purple}{\scriptsize\textbf{$\downarrow$ 0.80}}} & \textbf{24.00} \\ \hline
\multicolumn{1}{c|}{}                              & NDCG & 71.80\textsubscript{\textcolor{purple}{\scriptsize\textbf{$\downarrow$ 7.90}}} & \textbf{79.70} \\ \cline{2-4} 
\multicolumn{1}{c|}{\multirow{-2}{*}{KGRec-music}} & HR   & 92.34\textsubscript{\textcolor{purple}{\scriptsize\textbf{$\downarrow$ 3.43}}} & \textbf{95.77} \\ \hline
\end{tabular}
\end{table}
To analyze our learned causality, we compared explanations generated by CausalRec, SASRec, and NARM (Fig. \ref{fig:enter-label}). Each row shows a user’s history of actions (left to middle). The colored box on the left represents the explanation for the model recommendation, and the box on the right is the ground truth (orange) alongside models' recommendations: CausalRec (green), NARM (red), and SASRec (purple). We bold two elements: the film name if it is in the same series as the target, and the category if it matches that of the target film. We also show the learned causal graph of \ours{} on the right. In four examples, CausalRec consistently identifies another film from the same series as a causal explanation, which is intuitively sound. SASRec emphasizes category similarity, capturing thematic but less direct links, while NARM often provides unclear explanations with weak ties. These comparisons show that CausalRec more precisely identifies the causal link between user history and the recommendation. NARM or SASRec often locate semantically relevant items but miss the actual cause-and-effect relationship. These findings align with our experimental results, emphasizing that incorporating a causal relationship can substantially enhance the clarity and correctness of recommendation explanations.

\section{Conclusion}
We proposed \ours{}, a CausalBoost sequential recommendation model that integrates causality into attention mechanisms. To our knowledge, \ours{} is the first sequential recommendation model that integrates causal attention, effectively learns item-level causal relationships, and integrates them into the attention mechanism, enhancing predictions by emphasizing items with genuine causal effects on user preferences rather than mere correlations. In the transformer framework, \ours{} follows the causal identifiability condition as Lemma~\ref{lm1} due to the attention mechanism with layer normalization and achieves the best effect with a causal booster process. Experiments on four real-world datasets demonstrated \ours{}'s superiority over state-of-the-art baselines in NDCG and HR metrics. The visual results show that \ours{} can capture underlying user patterns, such as a preference for film series, underscoring the interpretability benefits of modeling causality. 


\section{Ethical Considerations}
\label{sec:eth}
CausalRec, a sequential recommendation model integrating causal attention, presents potential ethical risks, including biased recommendations due to implicit biases in historical user behavior data (which may disadvantage specific user groups), and security risks of malicious attacks on causal graph data. Mitigation strategies involve bias auditing of training data using fairness-aware machine learning techniques, applying differential privacy to user interaction sequences and encrypted storage of causal graphs, conducting regular penetration testing on the model architecture, and establishing transparent audit trails for recommendation logic to ensure alignment with ethical principles and user welfare.

\bibliographystyle{ACM-Reference-Format}
\balance
\bibliography{main}

\end{document}